\newcommand{\h}{\mathcal{H}}
\newcommand{\g}{\mathfrak{g}}
\newcommand{\R}{\mathbb{R}}
\newcommand{\C}{\mathbb{C}}
\newcommand{\A}{\mathcal{A}}
\newcommand{\be}{\begin{eqnarray}}
\newcommand{\bes}{\begin{eqnarray*}}
\newcommand{\ee}{\end{eqnarray}}
\newcommand{\ees}{\end{eqnarray*}}
\newcommand{\Spec}{\textrm{Spec }}
\newcommand{\Hom}{\textrm{Hom}}
\newtheorem{theorem}{Theorem}[section]
\newtheorem{lem}[theorem]{Lemma}
\newtheorem{prop}[theorem]{Proposition}
\theoremstyle{definition}
\newtheorem{definition}{Definition}
\newtheorem{example}{Example}
\theoremstyle{remark}
\newtheorem{remark}{Remark}
\newcommand{\id}{\operatorname{id}}
\begin{document}

\title{A perspective on regularization and curvature}
\author{Susama Agarwala}
\address{California Institute of Technology\\1200 E California Ave.\\ Pasadena, CA 91107}

\maketitle

\begin{abstract} 
A global connection on the Connes Marcolli renormalization bundle
relates $\beta$-functions of a class of regularization schemes by
gauge transformations, as well as local solutions to $\beta$-functions
over curved space-time.
\\ \textbf{Mathematical Subject Classification
  (2010): 81R99} \keywords{quantum field theory -- equisingular
  connection -- $\beta$-function -- $\zeta$-function regularization}

\end{abstract}


\section{Introduction}

The process of regularization and renormalization is well known to
physicists studying Quantum Field Theories (QFTs) and can be found in
textbooks such as \cite{Ti} chapters 18-21, and
\cite{IASW}. Regularization is the process of rewriting an undefined
quantity in terms of certain parameters such that the quantity is well
defined away from a predetermined limit of the parameters, and
renormalization makes sense of the regularized quantity at the
limit. There are many ways of regularizing a QFT, and little is known
about how to relate different methods. Furthermore, solving a QFT for
physical values after regularization is a local process. There is no
known way to solve a QFT over a curved space-time background (i.e. a
QFT that has been coupled to gravity). This paper shows a geometric
way of conceptualizing the relationship between regularization
schemes, and of local solutions to regularized QFTs.

Many different types of regularization processes commonly
used, such as dimensional regularization, Pauli-Villars
regularization, momentum cut-off regularization, $\zeta$-function
regularization, and point splitting regularization. The choice of
regularization scheme depends on the symmetries of the QFT under
study, and the ease of calculation of the scheme, among other
factors. These schemes are not equivalent to each other and very
little is known about relationships between them.

Renormalization extracts a well defined physical value at a
predetermined limit of the regularization parameters introduced, that
matches the value observed in experiment. A common method of
renormalization of Feynman amplitudes is BPHZ renormalization, which
is an algorithm for iteratively subtracting off terms in a regularized
QFT that would lead to divergences. The key object necessary to solve
for numerical values from a regularized Lagrangian is called a
$\beta$-function. Analytically, the contribution to the
$\beta$-function of for any regularization scheme has not been solved
for graphs containing more than a few loops. Different regularization
schemes give rise to different values of the
$\beta$-function. Furthermore, efforts at studying the
$\beta$-function over a curved space-time background (i.e. for a QFT
coupled with gravity) have only been successful locally.

In 2001 \cite{CK01}, Connes and Kreimer address the first of these
drawbacks by first showing that BPHZ renormalization for dimensional
regularization of a scalar field theory is exactly the process of
Birkhoff decomposition of loops. They then express the contributions
of arbitrary graphs to the $\beta$-function in terms of the Birkhoff
decomposed loops.  In 2006, \cite{CMbook} Connes and Marcolli rewrite
this process in terms of a renormalization bundle over the space
parameterized by the regulator and identify a class of connections on
this bundle defined uniquely by the $\beta$-function.

In this paper I notice that this setup allows one to study
renormalization schemes beyond dimensional regularization. I identify
a global connection on the renormalization bundle. Sections of the
renormalization bundle differ from each other either by the
regularization scheme they represent, or the parameters of the
Lagrangian of the QFT they represent. The choice of regularization
scheme and Lagrangian define a gauge field on the bundle. Identifying
a global connection relates pullbacks of the connection along sections
via gauge transformations. In the first case, the global connection
relates different regularization schemes, including regularization
schemes that do not have well defined $\beta$-functions, i.e.
non-renormalizable regularization schemes, and ones that do, i.e.
renormalizable ones.  The second case gives a way of understanding
renormalization over curved space time. Currently, $\beta$-function
calculations are done in coordinate patches, and it is difficult to
check for consistency of results across these patches. Computations on
two different coordinate patches correspond to different sections of
the renormalization bundle. The existence of a global connection
implies the existence of a global $\beta$-function over the general
manifold.

Section two of this paper reviews the development of the tools
necessary for the construction of the renormalization bundle,
following \cite{CK00}, \cite{CMbook}, \cite{EM}. Section 3 discusses the
physical and geometrical $\beta$-function, following \cite{IASW} and
\cite{CMbook}. Section 4 defines the global section on the
renormalization bundle.

\section{The Connes Marcolli renormalization bundle}

In this paper, I work with the renormalizable scalar quantum field
theory of valence 3 interactions. It can be defined by the Lagrangians
of the form \be \mathcal{L}= \frac{1}{2} (|d\phi|^2 -m^2\phi^2) +
g\phi^3 \; , \label{StLag} \ee where $m$ is the mass of the parameter,
and $g$ is the coupling constant. I use this particular Lagrangian to
stay consistent with the work in \cite{CK00}, \cite{CK01} and
\cite{CMbook}. 
These interactions can be depicted graphically in Feynman diagrams.

\subsection{Hopf algebra}

Feynman graphs can be given a Hopf algebra structure by considering the one particle reducible, or 1PI, graphs that make up the general graphs. 

\begin{definition}
A 1PI graph is a connected Feynman graph such that the removal of any
internal edge still results in a connected graph.
\end{definition}

A Hopf algebra can be built out of the Feynman diagrams by assigning
variables $x_\Gamma$ to each 1PI graph $\Gamma$ and considering the
polynomial algebra on these variables $\h = \C[\{x_\Gamma| \Gamma
  \textrm{ is 1PI }\}]$. This Hopf algebra is constructed in
\cite{CK00}. The product of two variables in this algebra
$m(x_{\Gamma_1}\otimes x_{\Gamma_2}) = x_{\Gamma_1}x_{\Gamma_2}$
corresponds to the disjoint union of graphs, and the unit is given by
the empty graph, $1_\h = x_\emptyset$. 

To construct the co-product, I need to review the definition of
admissible subgraphs.

\begin{definition}
Let $V(\Gamma)$ be the set of vertices of a graph $\Gamma$,
$I(\Gamma)$ the set of internal edges and $E(\Gamma)$, the set of
external edges. The Feynman diagram $\gamma$ is an admissible subgraph
of a 1PI Feynman diagram $\Gamma$ if and only if the following
conditions hold:
\begin{enumerate}
\item The Feynman diagram $\gamma$ is a 1PI Feynman diagram,
  or a disjoint union of such diagrams.
\item Let $\gamma'=\gamma \setminus E(\gamma)$ be the diagram $\gamma$
  without its external edges. There is an embedding $i: \gamma'
  \hookrightarrow \Gamma$ that preserves the field type of each edge.
\item The set of edges (internal and external) meeting the vertex $v
  \in V(\gamma)$ is he same as the set of edges meeting $i(v) \in
  V(\Gamma)$.
\end{enumerate}
\end{definition}

The last condition ensures that the external leg conditions are
preserved under the embedding. Finally, I recall a definition of a
contracted graph to represent the divergences that remain after the
subtraction of the subdivergences.

\begin{definition}Let $\gamma$ be a disconnected admissible
subgraph of $\Gamma$ consisting of the connected components $\gamma_1
\ldots \gamma_n$. A contracted graph $\Gamma//\gamma$, is the Feynman
graph derived by replacing each connected component $i(\gamma'_j)$,
with a vertex $v_{\gamma_j} \in V(\Gamma//\gamma)$. \end{definition}

The coproduct of this Hopf algebra is given by the subgraph and
contracted graph structure of the Feynman diagrams \bes \Delta
x_\Gamma = 1 \otimes \Gamma + \Gamma\otimes 1 + \sum_{\gamma
  \subsetneq \Gamma}x_\gamma \otimes x_{\Gamma//\gamma}\ees where the
sum is taken over all proper admissible subgraphs of $\Gamma$. The
unit is a map \bes \eta: \C &\rightarrow& \h \\ 1 &\mapsto& 1_\h \ees
and the co-unit
can be defined on generators of $\h$ as \bes \varepsilon : \h
&\rightarrow& \C \\ 1_\h &\mapsto& 1 \\ x_{\Gamma \neq \emptyset}
&\mapsto& 0 \;.\ees The kernel of the co-unit is the ideal generated
by all $x_\Gamma$ such that $\Gamma$ is non-empty. The antipode is
defined to satisfy the antipode condition for Hopf
algebras \begin{eqnarray*}S:\h &\rightarrow& \h \\x_\Gamma
  &\rightarrow& -x_\Gamma -\sum_{\gamma \subset \Gamma}m(S(x_\gamma)
  \otimes x_{\Gamma \slash \slash \gamma})\;. \end{eqnarray*} This is
a bigraded Hopf algebra, with one grading given by loop number and the
other by insertion number. Let $\h^n$ be the $n^{th}$ graded element of
$\h$ by loop number. If $x_\Gamma \in \h^n$, then $\Gamma$ has $n$
loops. The grading operator $Y$ on $\h$ is defined on generators as
\bes Y: \h^n &\rightarrow& \h^n \\ x_\Gamma &\mapsto& n x_\Gamma \;
.\ees Details on the two grading structures are given in \cite{CK00}
and \cite{BK06}. This Hopf algebra is associative, co-associative
\cite{CK00} and commutative, but not co-commutative.

In general, Hopf algebras can be interpreted as a ring of functions on
a group. Since the spectrum of a commutative ring is an affine space,
the group in question is affine group scheme, $G = \Spec \h$. The
group laws on the Lie group $G$ are covariantly defined by the Hopf
algebra properties \bes (\id\otimes \Delta) \Delta = (\Delta \otimes
\id)\Delta & \leftrightarrow & \rm{multiplication} \\ (\id \otimes
\varepsilon)\Delta = \id & \leftrightarrow & \rm{identity} \\ m(S
\otimes \id) \Delta = \varepsilon \eta & \leftrightarrow &
\rm{inverse} \ees The group $G$ can also be viewed as a functor from a
$\C$ algebra $A$ to $G(A) = \Hom_{\textrm{alg}}(\h, A)$. The affine
group scheme $G$ is developed in detail in \cite{CMbook}. The last
condition above means that if $\gamma \in G(A)$, and $x\in \h$, then
$\gamma^{-1}(x) = S(\gamma(x)) = \gamma(S(x))$.

The Lie algebra $\g$ associated to $G$ is the infinitesimal elements
$\delta_\gamma$, where \bes \delta_{\gamma_1}(x_{\gamma_2})
= \begin{cases} 1 & \text{$\gamma_1= \gamma_2$,} \\ 0 &
  \text{else.} \end{cases} \ees By the Milnor-Moore theorem, the
universal enveloping algebra is isomorphic to the restricted dual of
$\h$ \bes \mathcal{U}(\g) \simeq \h^\vee = \oplus_n\h^{n*}\; \ees
where the grading is given by the loop number of the graph. The
restricted dual is the direct sum of the duals of each graded
component of $\h$. The product is defined on $\h^\vee$ by the
convolution product \bes \alpha_1\star \alpha_2 (x_\Gamma) =
m(\alpha_1 \otimes \alpha_2)(\Delta x_\Gamma) \quad \alpha_i \in
\h^\vee \;.\ees This is described in detail in \cite{CK00} and
\cite{Men}. The convolution product on $\g$ acts as an insertion
operator on $\h$. For two generators of $\h$, $x_{\Gamma_1}$ and
$x_{\Gamma_2}$, define \bes x_{\Gamma_1} \star x_{\Gamma_2} =
\sum_{x_\Gamma} m(\delta_{\Gamma_1} \otimes \delta_{\Gamma_2})(\Delta
x_\Gamma) \cdot x_\Gamma \ees where the sum is taken over all
generators of $\h$. This product induces an insertion product on the
1PI graphs of a theory in the same fashion that the coproduct on $\h$
is induced by the subgraph structure on the 1PI graphs. This
convolution product induces a pre Lie structure on the generators on
the 1PI graphs of a theory. The Lie bracket \bes
[x_{\Gamma_1},x_{\Gamma_2}] = x_{\Gamma_1} \star x_{\Gamma_2} -
x_{\Gamma_2}\star x_{\Gamma_1} \ees follows the Jacobi identity, as
can be checked. For details on this construction, see \cite{CK00} and
\cite{EK06}. The grading operator $Y$ can be defined on $\h^\vee$ as
$Y(\gamma(x)) = \gamma(Y(x))$.

Manchon \cite{Man} develops bijective correspondence between $G(A)$
and a $\g(A)$ defined as \bes \tilde{R} : G(A) &\rightarrow &\g(A)
\\ \gamma &\mapsto& \gamma^{\star -1} \star Y(\gamma) \; .\ees Manchon
also shows that this is inverse of the time ordered expansional defined
by Connes and Marcolli in \cite{CMbook} \bes Te : \g(A) &\rightarrow&
G(A) \\ \alpha &\mapsto & Te^{\int_a^b\theta_{-s}(\alpha) ds}\ees

\begin{remark}
The time ordered expansional is not the same bijection as that of the
standard exponential map from $\g$ to $G$, \bes \rm{exp(\alpha)} =
\epsilon + \alpha + \frac{\alpha^2}{2!} + \ldots \ees The time ordered
expansional is given by the formula \bes Te^{\int_a^b\theta_{-s}
  (\alpha) ds}= \epsilon + \sum_{n=1}^\infty\underbrace{Y^{-1}(\ldots
  Y^{-1}(}_{n \textrm{ times}}\alpha)\ldots) \;. \ees In fact the
operator $S \star Y (\gamma)$ is closely related to the Dynkin
operator on the commutative Hopf algebra $\h$ of Feynman graphs
\cite{BEP}.
\end{remark}

\subsection{Birkhoff decomposition}

In \cite{CK00}, Connes and Kreimer show that BPHZ renormalization can
be written as a composition of loops in the Lie group $G$ using the
Birkhoff decomposition theorem. The following is a summary of their
results.

Let $\A = \C\{\{z\}\}$ be the algebra of formal Laurent series in $z$
with poles of finite order. Then $\Spec \A = \Delta^*$, the punctured
infinitesimal disk around the origin in $\C$. Let $\gamma(z)$ be a map
from a simple loop not containing the origin in $\Delta^*$ to
$G$. There is a natural isomorphism from the group of these maps and
$G(\A)$.  By the Birkhoff decomposition theorem, $\gamma(z)$
decomposes as the product \bes \gamma(z) = \gamma_-^{-1}(z)\star
\gamma_+(z) \;,\ees where $\gamma_+(z)$ is a well defined map in the
interior of the loop (containing $z=0$), and $\gamma_-^{-1}(z)$ is a
well defined map outside of the loop (away from $z=0$). Each
$\gamma(z)$ can be written as a Laurent series with poles of finite
order and coefficients in $G(\C)$ convergent in $\Delta^*$. 
The map $\gamma_+(z)$ is a somewhere convergent formal power series in
$z$, and for $x_\Gamma \not \in \ker(\varepsilon)$,
$\gamma_-(z)(x_\Gamma) = \sum_{-n}^{-1} a_i z^i$, where $a_i \in
G(\C)$. Finally, normalizing $\gamma_-(z)(x_\emptyset) = 1_\h$,
following \cite{CK00}, ensures the uniqueness of th Birkhoff
decomposition. 

Connes and Kreimer \cite{CK00} show that the recursive formula for
calculating $\gamma_+(z)(x_\Gamma)$ and $\gamma_-(z)(x_\Gamma)$ is the
exact same as the recursive formula for calculating the renormalized
and counterterm contributions respectively of a Feynman diagram
$\Gamma$ to the regularized Lagrangian given by BPHZ. For $\Gamma$ a
1PI graph, $\gamma(z)(x_\Gamma)$ is the value of the regulated Feynman
integral of the graph $\Gamma$, $\lim_{z\rightarrow
  0}\gamma_+(z)(x_\Gamma)$ is the renormalized value of the graph
while $\gamma_-(z)(x_\Gamma)$ is the counterterm. 

\begin{remark} 
Elements of $G(\A)$ correspond to QFTs regulated by a complex
parameter. Any $\phi^3$ scalar field theory in $6$ dimensions under
any regularization scheme that yields results in $\C\{\{z\}\}$
corresponds to a $\gamma(z) \in G(\A)$. Connes and Kreimer's work in
\cite{CK00} extends to a huge class of Lagrangians and regularization
schemes. \label{sections}
\end{remark}

\section{The $\beta$-function}
The regularization process results in a Lagrangian that is a function
of the regularization parameter. Prior to regularization, the
Lagrangian of any theory is scale invariant. That is \bes \int_{\R^n}
\mathcal{L}(x) \, d^nx = \int_{\R^n} \mathcal{L}(x)\, d^n(tx)\; .\ees
When the Lagrangian is regularized, and written in terms of a
regularization parameter, $z$, it is no longer scale
invariant. Specifically, the counterterms of a theory depends on the
scale of the Lagrangian. In order to preserve scale invariance in the
regularized Lagrangian one introduces a regularization mass, which is
also a function of the regularization parameter, to cancel out any
scaling effects introduced by regularization.

\subsection{Derivation in physics}
The renormalization group describes how the dynamics of Lagrangian
depends on the scale at which it is probed. One expects that probing
at higher energy levels reveals more details about a system than at
lower energies. To go from higher energy to lower, average over the
extra information at the higher energy, $\lambda$, and rewrite it in
terms of a finite number of parameters at a lower energy, $\mu$. 
For a specified set of fields and interactions the Lagrangian,
$\mathcal{L}$ at an energy scale $\mu$ is written $(\mathcal{L}, \mu)$
where $\mathcal{L}$ has coefficients depending on $\mu$.

Formally, let $M \simeq \mathbb{R}_+$ be a non-canonical energy space,
with no preferred element. Fix a set of fields and interactions. Call
$S$ the set of Lagrangians for this system in the energy
space, $M$. For $\lambda, \mu \in M$ such that $\lambda > \mu$, there
is a map \be R_{\lambda, \mu}: \quad S \rightarrow
S \label{Raction}\ee so that the  Lagrangian at $\mu$ is
written $R_{\lambda, \mu}\mathcal{L}$ for $\mathcal{L} \in S$. The map
in \eqref{Raction} can be written as an action of $(0,1]$ on $S \times
  M$: \be \begin{array}{ccc}(0, 1] \times (S \times M) &\rightarrow& S
      \times M \\t \circ (\mathcal{L}, \lambda) &\mapsto& (R_{\lambda,
        t\lambda}\mathcal{L}, t\lambda) \;
      .\end{array} \label{semidirprodish}\ee In the
  Lagrangian $R_{\lambda, t\lambda}\mathcal{L}(t)$, all parameters,
  $m$, $\phi$, and $g$ are functions of the mass scale $t$. The map
  $R_{\lambda, \mu}$ satisfies the properties
\begin{enumerate}
\item $ R_{\lambda, \mu}R_{\mu,\rho} = R_{\lambda, \rho} \; .$
\item  $R_{\lambda, \lambda} = 1 \; .$
\end{enumerate}

\begin{definition} The set $\{R_{\lambda, \mu}\}$ forms a semi-group
called the renormalization group in the physics
literature.\label{renormgrp}\end{definition}

The renormalization group equations can be derived from
differentiating the action in \eqref{semidirprodish} and solving \be
\frac{\partial}{\partial t} (R_{\lambda, t\lambda}\mathcal{L}_{ct}) =
0 \label{local} \; .\ee This differential equation gives rise to a
system of differential equations that describe the $t$ dependence of
the unrenormalized parameters, $m(t)$, $g(t)$ and $\phi(t)$, in
$R_{\lambda, t\lambda}\mathcal{L}(t)$. To solve the renormalization
group equations, it is sufficient to solve for $g(t)$. \emph{The
  $\beta$-function describes the $t$ dependence of $g$} and can be
written as \bes\beta(g(t)) = t \frac{\partial g(t)}{\partial t}\;.\ees

This above development of the renormalization group and
renormalization group equations follows \cite{IASG}. For details on
the renormalization group equations for a $\phi^4$ theory, QED and
Yang-Mills theory, see \cite{Ti} chapter 21 or \cite{Ry} Chapter 9.

Connes and Marcolli show that the $\beta$-function of a
renormalization theory is an element of $\g$. The quantities listed
above are the sums of the $ \beta$ function evaluated on the one loop
graphs. That is, the geometric $\beta$-function for a section $\gamma$
is given by \bes \beta(\gamma) =
\sum_{x_\Gamma} \beta(\gamma)(x_\Gamma) \; ,\ees where the sum is taken
over the $x_\Gamma$ generating $\h^1$.

\subsection{As a geometric object}

The geometric $\beta$-function requires a more general construction of
the renormalization group and Lagrangians. In the
renormalization bundle, the non-canonical energy space is given by $M
\simeq \C^\times$. The space $S$ of Lagrangians is replaced
by the space $G(\A)$, the space of evaluators of a regularized
 Lagrangians. The renormalization group is a group in this
generalization (not just a semi group) given by $\theta_s = e^{sY}$
for $s \in \C$.  The action of the renormalization group can be
written as a $\C^\times$ action that factors through $\C$ by setting
$t(s) = e^s$ \bes t^Y : \quad G(\A) &\rightarrow & G(\A) \\ \gamma(z)
&\mapsto & t^Y\gamma(z) = \gamma_t(z) \; . \ees The space $S \times M$
becomes $\tilde{G}(\A) = G(\A) \rtimes_\theta \C^\times$ in the
notation of \cite{CMbook}. The action of $\C^\times$ on $\tilde{G}(\A)$
is given by \be \begin{array}{ccc} \C^\times \times \tilde{G}(\A)
  &\rightarrow& \tilde{G}(\A)  \\ t \circ (\gamma, \lambda)
  &\mapsto & (t^Y\gamma, t\lambda) \;
  . \end{array} \label{ctimesaction} \ee

The renormalization bundle, $P^* \rightarrow B^*$ is a $\tilde{G}(\A)$
principle bundle. By the $\C^\times$ action in \eqref{ctimesaction},
it is a $C^\times$ invariant bundle.  The base space $B^* \simeq
\Delta^* \times \C^\times$ is a product of the regularization
parameter and the non-canonical energy space. In this context, the
$\beta$-function is given by \bes \beta(\gamma(z)) =
\frac{d}{dt}|_{t=1} \lim_{z\rightarrow 0} \gamma(z)^{\star -1}\star
t^{zY}(\gamma(z)) = \lim_{z\rightarrow 0}z\tilde{R}(\gamma) \; . \ees
This is only well defined when $\gamma(z)$ satisfies condition
\eqref{local}. To find the derivation of the geometric
$\beta$-function in this context, see \cite{CK01}, \cite{CMbook} or
\cite{EM}.

\section{A global connection\label{con}}
This section develops a global connection on the Connes-Marcolli
renormalization bundle. A standard result from differential geometry
\cite{Spivak} shows that a connection defined on the base space of a
bundle can be written as the pullback along a section of a global
connection defined on the top space of the bundle.  The connections on
$B^*$ identified by Connes and Marcolli in \cite{CMbook} correspond to
the pullbacks of a single global connection on $P^*$.

Let $\omega$ be that connection on $P^*$, defined on pullbacks along
sections by the logarithmic differential operator, as in
\cite{CMbook}.
\begin{definition} Let $D$ be a differential operator.
\bes D:\tilde{G}(\A) &\rightarrow& \Omega^1(\tilde\g) \\
(\gamma (z),t) &\mapsto&
(\gamma (z),t)^{\star -1}\star d(\gamma(z),t)\;. \ees
\end{definition}

Many of the properties of the connection
in \cite{CMbook} extend to the global connection.

\begin{lem}For $f, \; g \in \tilde{G}(\A)$, the differential
$D(f) = f^\star \omega$ defines a connection on
  section $f$ of $P^* \rightarrow B^*$.
\label{logmult}\end{lem}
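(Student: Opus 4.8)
The plan is to treat the assignment $f\mapsto D(f)$ as the family of \emph{gauge potentials} of a single principal connection and to verify the two facts that make this legitimate: that each $D(f)$ takes values in the Lie algebra $\tilde\g$, and that $D$ transforms under a change of section exactly as a connection $1$-form must. Because the bundle $P^*\rightarrow B^*$ carries the group $\tilde{G}(\A)$ of global sections acting simply transitively on its sections, prescribing $f^{*}\omega:=D(f)$ for every $f$, subject to the usual gauge-transformation rule, is equivalent to prescribing a unique connection $1$-form $\omega$ on $P^*$ with $f^{*}\omega=D(f)$. So the lemma reduces to checking that rule.

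First I would check that $D(f)$ is an infinitesimal character, i.e. that it lands in $\tilde\g\subset\tilde\g^\vee$. Writing $\Delta x=\sum x_{(1)}\otimes x_{(2)}$ and using that $f$ and $f^{\star -1}$ are algebra homomorphisms while $d$ (the exterior derivative on $B^*$) acts as a derivation on the $\A$-valued coefficients, one evaluates $D(f)$ on a product and obtains $D(f)(xy)=D(f)(x)\,\epsilon(y)+\epsilon(x)\,D(f)(y)$; this is the standard fact that the logarithmic derivative of a map into a pro-unipotent affine group scheme is valued in its Lie algebra. The identity section $\epsilon$ is constant on $B^*$, so $D(\epsilon)=\epsilon^{\star -1}\star d\epsilon=0$, which fixes the base point of the cocycle.

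The heart of the proof is the multiplicativity identity. Using the Leibniz rule $d(f\star g)=df\star g+f\star dg$ — immediate from $\Delta$ being an algebra map and $d$ a derivation on coefficients — together with $(f\star g)^{\star -1}=g^{\star -1}\star f^{\star -1}$, one computes
\[
\begin{aligned}
D(f\star g)&=(g^{\star -1}\star f^{\star -1})\star(df\star g+f\star dg)\\
&= g^{\star -1}\star D(f)\star g+g^{\star -1}\star dg
=\operatorname{Ad}(g^{\star -1})D(f)+D(g).
\end{aligned}
\]
Recognizing $D(g)=g^{\star -1}\star dg$ as the pullback of the Maurer--Cartan form, this is precisely the rule relating the gauge potentials of the sections $f$ and $f\star g$. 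Hence $D$ is a cocycle over the group of sections, so $f^{*}\omega:=D(f)$ is independent of the route taken to $f$ and patches to a globally defined $\tilde\g$-valued $1$-form $\omega$ on $P^*$. Its restriction to a fibre is the Maurer--Cartan form (take a curve $f\star g(s)$ lying in the fibre and apply $D$), and the same cocycle identity yields the structure-group equivariance, so $\omega$ is a connection.

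I expect the main obstacle to be bookkeeping rather than anything conceptual: one must justify all the $\star$-operations in the appropriate completed/filtered setting — $\A=\C\{\{z\}\}$ is filtered and $\tilde{G}(\A)$ pro-unipotent, so $\star$-inverses, the series implicit in $\operatorname{Ad}(g^{\star-1})$, and term-by-term differentiation all have to be seen to converge in the relevant algebra — and one must carry the semidirect structure $\tilde{G}(\A)=G(\A)\rtimes_\theta\C^\times$ through the computation, so that $d$ correctly incorporates the energy-scale ($\C^\times$) direction alongside the disk ($z$) direction via the action $\theta_s=e^{sY}$. Once this framework is in place, the algebra above is formal and short.
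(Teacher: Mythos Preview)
Your argument is correct and is essentially the paper's own proof: both verify that the logarithmic derivative satisfies the gauge-transformation (cocycle) identity, using the Leibniz rule for $d$ together with $df^{-1}=-f^{-1}\,df\,f^{-1}$, and conclude that the assignment $f\mapsto D(f)$ is the family of pullbacks of a single connection $\omega$. You are simply more thorough---the paper only checks the cocycle identity (with $f^{-1}g$ in place of your $fg$), omitting the $\tilde\g$-valuedness check and the convergence/semidirect bookkeeping you outline.
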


\begin{proof}
If $D$ defines a connection, it must satisfy equation \be (f^{\star
  -1} \star g)^*\omega = g^{-1}dg + g^{\star -1}(f^*\omega) g \;
,\label{pullbackcond}\ee for $f,\, g \in \tilde{G}(\A)$. Since
$df^{-1} =- f^{-1} df f^{-1}$, \bes D(f^{-1} g) = Dg - g^{-1}f
f^{-1}df f^{-1}g \;, \ees or \bes Dg = D(f^{-1} g) + (f^{-1}g)^{-1}Df
(f^{-1}g) \;.\ees which satisfies equation
\eqref{pullbackcond}. 
\end{proof}

\begin{prop} The connection $\omega$ is $\C^\times$ equivariant,
\bes u^Y\omega(z, t, x) = \omega (z, ut, u^Yx ) \; .\ees \end{prop}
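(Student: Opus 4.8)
The plan is to verify the equivariance directly from the defining differential equation $\gamma_t^*\omega = D\gamma_t(z)$ together with the $\C^\times$-action on $\tilde G(\A)$ given in \eqref{ctimesaction}. First I would recall that a section through a point is, up to the group action on fibers, a map $(\gamma(z),t)$ into $\tilde G(\A)$, and that the renormalization group acts on such a section by $u\circ(\gamma(z),t) = (u^Y\gamma(z), ut)$. The connection is pinned down on each such pullback by $D(\gamma(z),t) = (\gamma(z),t)^{\star-1}\star d(\gamma(z),t)$, so the whole computation reduces to showing that $D$ intertwines the $\C^\times$-action with the coadjoint-type action $u^Y(\cdot)$ on $\Omega^1(\tilde\g)$, i.e. that $D(u^Y\gamma, ut) = u^Y\, D(\gamma,t)$ after the appropriate identification of one-forms.

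The key steps, in order: (i) apply $D$ to the transformed section $(u^Y\gamma(z), ut)$, using that $d$ commutes with the constant (in the base variables) operator $u^Y = e^{(\log u)Y}$, so $d(u^Y\gamma, ut) = (u^Y(d\gamma), u\,dt)$; (ii) observe that $Y$ acts as a derivation/automorphism-generator on the convolution product, so $(u^Y\alpha)\star(u^Y\beta) = u^Y(\alpha\star\beta)$ and in particular $(u^Y\gamma)^{\star-1} = u^Y(\gamma^{\star-1})$; (iii) combine these to get $D(u^Y\gamma, ut) = u^Y\big((\gamma,t)^{\star-1}\star d(\gamma,t)\big) = u^Y D(\gamma,t)$; (iv) translate this identity on pullbacks back to the statement $u^Y\omega(z,t,x) = \omega(z,ut,u^Yx)$ by noting that the point $x$ of $P^*$ transforms to $u^Y x$ and the base point $(z,t)$ to $(z,ut)$ under the bundle action, while the section-independence of $\omega$ (Theorem~1.1 / Lemma~\ref{logmult}) guarantees the pullback characterization determines $\omega$ unambiguously.

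The main obstacle I expect is bookkeeping rather than conceptual: namely carefully checking that $Y$ (equivalently the operator $u^Y$) commutes with the exterior derivative $d$ on the base $B^*$ and is compatible with the convolution product in the sense needed for step (ii), since $u^Y$ acts on the $G(\A)$-factor only and leaves the $\C^\times$-factor's differential as a plain scaling $dt\mapsto u\,dt$. One has to be attentive to the semidirect-product structure of $\tilde G(\A) = G(\A)\rtimes_\theta \C^\times$: the $d$ of the $\C^\times$-component and the twisting $\theta$ interact, so I would unwind $d(\gamma,t)$ into its $\g$-part and its $\C\,dt$-part separately and check equivariance of each. Once the grading operator $Y$ is seen to be a coalgebra/algebra derivation (so $u^Y$ is an algebra automorphism of $\h^\vee$ for each $u$), everything else falls out, and I would also remark that this is exactly the equivariance property that makes $\omega$ descend consistently to the $\C^\times$-quotient and recovers the equisingularity condition of Connes--Marcolli as the restriction of this equivariance to flat sections.
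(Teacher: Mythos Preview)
Your proposal is correct. The paper's own ``proof'' is essentially a one-line deferral: it cites the argument in Connes--Marcolli's book for the particular section $\gamma_\mu^*\omega$ and asserts that the same computation goes through for an arbitrary section $(\gamma(z),t)$, using that $P^*\to B^*$ is a $\C^\times$-equivariant bundle. Your direct verification---showing that $u^Y$ commutes with $d$ on the base, is a convolution-algebra automorphism (since $Y$ is the grading derivation), and hence that $D(u^Y\gamma,ut)=u^Y D(\gamma,t)$---is precisely the content of that referenced computation, made explicit and phrased for a general section from the outset rather than generalized after the fact. So the two approaches coincide in substance; yours simply unpacks what the paper leaves as a citation.

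One minor caution: your closing remark that equisingularity is ``the restriction of this equivariance to flat sections'' overstates things. Equivariance is a property of $\omega$ itself and holds globally here; equisingularity is an additional condition on a \emph{pullback} $\gamma_t^*\omega$ (equivalently on the section), namely that the counterterm $\gamma_-$ be independent of $t$. The two are related but not the same, so I would drop or soften that sentence.
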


\begin{proof}
The proof given in \cite{CMbook} of this statement for
$\gamma_\mu^\star\omega$ generalizes to all sections $(\gamma(z), t)$,
and thus to the entire connection. Since $P^*\rightarrow B^*$ is a
$\C^\times$ equivariant bundle.
\end{proof}

Since $(\gamma (z), t) = t\circ(t^{-Y}\gamma(z), 1)$,
by the $\C^\times$ action on $\tilde{G}(\A)$, and
$(t^Y\gamma (z), 1)$ is identified with $\gamma _t$, it
is sufficient to define the connection of sections of the form \bes
(t^Y \gamma (z), 1)^* \omega = \gamma_t^*\omega \; .\ees

\begin{prop}Given any section $\gamma_{t}$, one can directly calculate
the corresponding pullback of the connection $\omega$ on it.
\[D(t^Y\gamma (z)) = t^{Y}(\gamma^{ \star -1}(z) \star
\partial_z \gamma (z))dz + t^{Y}(\tilde{R} (\gamma ) (z)
)\frac {dt}{t}\;.\]\end{prop}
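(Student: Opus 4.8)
The plan is to compute $D(\gamma _t)=\gamma _t^{\star -1}\star d\gamma _t$ directly, where $\gamma _t=(t^Y\gamma (z),1)$ and $d$ is the de Rham differential on the base $B^*\simeq \Delta^*\times \C^\times$ with coordinates $z$ and $t$. By the previous proposition $\omega$ is $\C^\times$-equivariant, and as noted just above it suffices to treat sections of the form $(t^Y\gamma (z),1)$, so this computation is exactly the content of the statement. Writing $d=dz\,\partial_z+dt\,\partial_t$, the calculation splits into a $dz$-component and a $dt$-component, and the proposition amounts to identifying each one.

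The key algebraic input I would use is that for each $t\in\C^\times$ the operator $t^Y=e^{(\log t)Y}$ is an automorphism of $G(\A)$ for the convolution product, i.e.\ $t^Y(\alpha\star\beta)=t^Y(\alpha)\star t^Y(\beta)$, so in particular $(t^Y\gamma )^{\star -1}=t^Y(\gamma ^{\star -1})$; this is because $Y$ is the grading derivation on $\h$ and the renormalization group $\theta_s=e^{sY}$ acts by Hopf algebra automorphisms, as recalled in Section 3. With this the $dz$-component is immediate: $\partial_z$ commutes with $t^Y$ (which depends only on $t$), so
\[
\gamma _t^{\star -1}\star\partial_z\gamma _t
= t^Y(\gamma ^{\star -1}(z))\star t^Y(\partial_z\gamma (z))
= t^Y\bigl(\gamma ^{\star -1}(z)\star\partial_z\gamma (z)\bigr),
\]
which gives the first term.

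For the $dt$-component I would differentiate $t^Y$ in $t$ graded piece by graded piece: since $t^Y$ acts on the degree-$n$ part of $\h^\vee$ by the scalar $t^n$, termwise differentiation gives $\partial_t t^Y=\frac{1}{t}\,Y\,t^Y=\frac{1}{t}\,t^Y Y$, using $[Y,t^Y]=0$. Hence $\partial_t\gamma _t=\frac{1}{t}\,t^Y\bigl(Y(\gamma (z))\bigr)$, and
\begin{align*}
\gamma _t^{\star -1}\star\partial_t\gamma _t
&= t^Y(\gamma ^{\star -1}(z))\star\tfrac{1}{t}\,t^Y\bigl(Y\gamma (z)\bigr) \\
&= \tfrac{1}{t}\,t^Y\bigl(\gamma ^{\star -1}(z)\star Y\gamma (z)\bigr)
= \tfrac{1}{t}\,t^Y\bigl(\tilde{R}(\gamma )(z)\bigr),
\end{align*}
where the last equality is the definition $\tilde{R}(\gamma )=\gamma ^{\star -1}\star Y(\gamma )$ from Manchon's construction. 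Multiplying by $dt$ yields the second term $t^Y(\tilde{R}(\gamma )(z))\,\frac{dt}{t}$, and adding the two components proves the formula.

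I expect the main obstacle to be justifying these manipulations rigorously in the pro-unipotent setting: that $D$, $d$ and $\star$ satisfy the expected Leibniz/chain rules when the group is the infinite-dimensional affine group scheme $G(\A)$ with coefficients in formal Laurent series, and in particular that the $t$-derivative of $t^Y$ may legitimately be computed one graded component at a time and then reassembled. Once one works degree by degree in $\h^\vee=\oplus_l\h^{l*}$, where each component is finite dimensional and everything reduces to an honest matrix-valued computation, the rest is the bookkeeping above.
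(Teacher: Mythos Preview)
Your argument is correct and follows essentially the same route as the paper: compute $d(t^Y\gamma)$ by splitting into $dz$- and $dt$-components, use that $t^Y$ is a convolution automorphism with $\partial_t t^Y=\tfrac{1}{t}t^Y Y$, and then left-multiply by $(t^Y\gamma)^{\star-1}$ to recognize $\tilde{R}(\gamma)=\gamma^{\star-1}\star Y(\gamma)$. The only difference is presentational: the paper first writes $d(t^Y\gamma)=t^Y(\partial_z\gamma)\,dz+t^Y\gamma\star\tilde{R}(\gamma)\,\frac{dt}{t}$ and then multiplies on the left, whereas you compute $D$ component by component directly---with more explicit justification than the paper gives.
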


\begin{proof} One has \[d(t^{Y}\gamma (z)) =
t^{Y}(\partial_z\gamma (z))dz + t^{Y} \gamma  (z) \star
\tilde{R} (\gamma (z)) \frac{dt}{t}\; .\] Multiplying on the
left by $\gamma^{  \star -1}$ gives the logarithmic
derivative \[Dt^{Y} \gamma  (z) = t^{Y}(\gamma^{ \star
  -1}(z) \star \partial_z\gamma (z))dz + t^{Y}(\tilde{R}
(\gamma )(z) )\frac{dt}{t}\;.\]
\end{proof}

Since $\omega \in \Omega^1(\tilde{\g})$, $\omega$ has the form \bes
(\gamma, t)^*\omega = a_\gamma(z, t)dx + b_\gamma(z, t)\frac{dt}{t}
\\ (\gamma^*_t)\omega = a_{\gamma_t}(z, 1)dx + b_{\gamma_t}(z,
1)\frac{dt}{t} \;.\ees 
The terms $a_{\gamma_t}$ and $b_{\gamma_t}$ are defined as
\be a_{\gamma_t}(z,1) = \gamma^{ \star -1}(z) \star
\partial_z \gamma (z) \notag \\
 b_{\gamma_t}(z,1) = \tilde{R}(\gamma)(z)
\label{bdef} \;.\ee

\begin{prop}The connection $\omega$ is flat. \end{prop}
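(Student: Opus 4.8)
The plan is to recognise $\omega$ as a Maurer--Cartan form and use the standard fact that connections of this shape have vanishing curvature. Flatness means that the curvature two-form $F_\omega = d\omega + \omega\wedge\omega$ (the wedge taken with the $\star$-product on $\tilde{\g}\subset\h^\vee$) vanishes on $P^*$. Since $F_\omega$ is tensorial --- horizontal and $\tilde{G}(\A)$-equivariant --- it is determined by its pullback along any section trivialising the bundle, and by Lemma \ref{logmult} the sections $(\gamma(z),t)$ of $P^* \to B^*$ furnish exactly such trivialisations, with $(\gamma(z),t)^*\omega = D(\gamma(z),t) = (\gamma(z),t)^{\star -1}\star d(\gamma(z),t)$. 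So it suffices to check that $d\!\left(D(f)\right) + D(f)\wedge D(f) = 0$ for every section $f=(\gamma(z),t)$.

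First I would record the identity $d\!\left(f^{\star -1}\right) = -\,f^{\star -1}\star df\star f^{\star -1}$, obtained by differentiating $f^{\star -1}\star f = \epsilon$ exactly as in the proof of Lemma \ref{logmult}. Then, since $d^2 = 0$,
\[
d\!\left(D(f)\right) \;=\; d\!\left(f^{\star -1}\right)\wedge df \;=\; -\,f^{\star -1}\star df \star f^{\star -1}\wedge df \;=\; -\,D(f)\wedge D(f),
\]
so the pullback of $F_\omega$ along $f$ vanishes; since this holds for all trivialising sections and $F_\omega$ is tensorial, $F_\omega = 0$ on $P^*$. A coordinate version is also available: by the preceding proposition and \eqref{bdef} one has $(\gamma_t)^*\omega = a_{\gamma_t}(z,1)\,dz + b_{\gamma_t}(z,1)\,\frac{dt}{t}$ with $a_{\gamma_t} = t^Y(\gamma^{\star -1}\star\partial_z\gamma)$ and $b_{\gamma_t} = t^Y(\tilde{R}(\gamma))$, and flatness in the coordinates $(z,t)$ is the single Maurer--Cartan equation $t\,\partial_t a_{\gamma_t} - \partial_z b_{\gamma_t} = [a_{\gamma_t},b_{\gamma_t}]$ with the $\star$-commutator bracket; substituting $\tilde{R}(\gamma) = \gamma^{\star -1}\star Y\gamma$, using that $Y$ is a $\star$-derivation commuting with $t^Y$, and invoking the inverse identity again makes the terms collapse --- this is just the display above rewritten componentwise.

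The step that genuinely needs care is the passage from ``$f^*F_\omega = 0$ for the trivialising sections $f$'' to ``$F_\omega = 0$'': one must know that curvature is tensorial and that the sections $(\gamma(z),t)$ really do cover $P^*$. This is routine for finite-dimensional principal bundles, but here $\tilde{G}(\A)$ is a pro-unipotent affine group scheme times $\C^\times$ and $P^*$ is infinite-dimensional, so I would either invoke the differential-geometric setup of this bundle in \cite{CMbook}, or argue directly that $P^* \to B^*$ admits a global section and is therefore a trivial principal bundle, so that a single gauge potential $f^{\star -1}\star df$ determines $\omega$ and the one computation above is the entire statement.
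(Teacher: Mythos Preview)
Your argument is correct and follows the same reduction as the paper: check flatness on the pullbacks along the trivialising sections $\gamma_t$, where it becomes the single equation \eqref{flat}. The paper's own proof stops at stating that equation, whereas you actually verify it via the Maurer--Cartan identity $d(f^{\star -1}\star df)=-\,(f^{\star -1}\star df)\wedge(f^{\star -1}\star df)$, so your write-up is strictly more complete.
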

\begin{proof}It is sufficient to check that each pullback 
is flat. That is, that all the pullbacks satisfy \bes
[a_{\gamma_t}(z,1), b_{\gamma_{t}}(z,1)] =
\partial_t(a_{\gamma_{t}}(z,1)) -
\partial_z(b_{\gamma_{t}}(z,1))\;.\ees
\end{proof}

Given this explicit form of $a_{\gamma_t}$ and $b_{\gamma_t}$, I can
state the main theorem of this paper.

\begin{theorem}Let $\omega$ be a global connection on
the bundle $P^* \rightarrow B^*$ defined on sections of the bundle by
the differential equation $\gamma_{t}^*\omega = D \gamma_{t}
(z)$. Then $\omega$ is uniquely defined by $\tilde{R}:
G(\A)\rightarrow \g(\A)$.\label{mainthm}\end{theorem}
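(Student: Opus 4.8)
The plan is to establish the theorem in two parts: first showing that the differential equation $\gamma_t^*\omega = D\gamma_t(z)$ together with the $\C^\times$-equivariance consistently defines a connection on $P^*$, and second showing that the resulting object is completely determined by the single $\g(\A)$-valued datum $\tilde{R}(\gamma)$. For the first part I would invoke Lemma~\ref{logmult}, which already shows that $D(f) = f^\star\omega$ satisfies the gauge-transformation compatibility condition \eqref{pullbackcond} for any $f, g \in \tilde{G}(\A)$; since the sections $\gamma_t$ span the section group, this pins down $\omega$ on all pullbacks. Then the $\C^\times$-equivariance proposition reduces the data to sections of the form $(t^Y\gamma(z), 1) = \gamma_t$, so it suffices to specify $\gamma_t^*\omega$, which the preceding Proposition computes explicitly as
\[
\gamma_t^*\omega = t^Y(\gamma^{\star-1}(z)\star\partial_z\gamma(z))\,dz + t^Y(\tilde{R}(\gamma)(z))\,\frac{dt}{t}.
\]

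The second, substantive part is to argue that the $dz$-component is not independent data but is forced by the $dt/t$-component $b_{\gamma_t} = t^Y\tilde{R}(\gamma)(z)$, so that knowing $\tilde{R}(\gamma)$ determines the whole connection. The key step here is the flatness condition \eqref{flat}, $[a_{\gamma_t}, b_{\gamma_t}] = \partial_t a_{\gamma_t} - \partial_z b_{\gamma_t}$: read as a differential equation in the $z$-variable for $a_{\gamma_t}$ with $b_{\gamma_t}$ prescribed, together with the boundary/normalization condition coming from the Birkhoff decomposition (namely that $\gamma_+(z)$ is regular at $z=0$ and $\gamma_-^{\star-1}(z)$ carries only the polar part, with $\gamma_-^{\star-1}(z)(1)=1$), this has a unique solution. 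Equivalently, I would use the fact that $\tilde{R}$ is the inverse of the time-ordered expansional $Te$: from $\beta(\gamma^\dagger) := \tilde{R}(\gamma)$ one recovers $\gamma$ itself (up to the holomorphic gauge freedom) via $\gamma = Te^{-\int_0^\infty \theta_{-s}(\tilde{R}(\gamma))\,ds}$ in the manner of Connes--Marcolli, hence recovers $a_{\gamma_t} = t^Y(\gamma^{\star-1}\star\partial_z\gamma)$. So the map $\tilde{R}$ being a bijection $G(\A)\to\g(\A)$ is exactly what makes the connection recoverable from its $dt/t$-component.

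The main obstacle I anticipate is making precise the sense in which $\omega$ is ``defined by'' $\tilde{R}(\gamma)$ — i.e. identifying the correct equivalence relation. As stated in the introduction, the connections of Connes--Marcolli are determined by the $\beta$-function only \emph{up to gauge equivalence by holomorphic sections} $\tilde{G}(\A_+)$, and the honest statement here should be that $\omega$ is determined by $\tilde{R}(\gamma)$ modulo that $\tilde{G}(\A_+)$-gauge freedom (which does not change the type of singularity at $z=0$). I would therefore spend the core of the proof checking that two sections $\gamma, \gamma'$ with $\tilde{R}(\gamma) = \tilde{R}(\gamma')$ differ by an element of $\tilde{G}(\A_+)$, and that such a difference acts on $\omega$ precisely by a gauge transformation via \eqref{pullbackcond}; this uses the uniqueness clause of the Birkhoff decomposition together with the explicit polar/regular splitting recalled in Section~2.3. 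The remaining verifications — that the formula for $\gamma_t^*\omega$ is $\C^\times$-equivariant, and that it satisfies the flatness identity \eqref{flat} — are the routine computations already isolated in the preceding propositions and I would simply cite them rather than redo them.
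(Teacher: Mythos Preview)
Your core argument---reduce by $\C^\times$-equivariance to sections $\gamma_t$, then use that $\tilde{R}$ is inverted by the time-ordered expansional so that $b_{\gamma_t}=t^Y\tilde{R}(\gamma)$ recovers $\gamma$ and hence $a_{\gamma_t}$---is exactly what the paper does. The paper's proof is just those two sentences: write $a_{\gamma_t}$ explicitly as $Te^{-\int\ldots}\star\partial_z Te^{\int\ldots}$ with $b_{\gamma_t}$ in the exponent, and conversely observe that $\tilde{R}$ is a bijection $G(\A)\to\g(\A)$ so $\tilde{R}(\gamma)$ determines $\gamma$ and hence $\gamma^*\omega$.

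Where you go astray is the entire third paragraph. You write that $\gamma$ is recovered from $\tilde{R}(\gamma)$ ``up to the holomorphic gauge freedom'' and then propose to spend the core of the proof showing that two sections with $\tilde{R}(\gamma)=\tilde{R}(\gamma')$ differ by an element of $\tilde{G}(\A_+)$. But $\tilde{R}$ is a \emph{bijection}: if $\tilde{R}(\gamma)=\tilde{R}(\gamma')$ then $\gamma=\gamma'$ on the nose, and there is no equivalence relation to identify. You are conflating this theorem with the later equisingular result, where the datum is not $\tilde{R}(\gamma)\in\g(\A)$ but only its residue $\beta(\gamma)\in\g(\C)$, and \emph{that} determines the pullback only up to $G(\A_+)$-gauge. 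Here the full Laurent-series-valued $\tilde{R}(\gamma)$ is retained, so determination is exact. Drop the Birkhoff/gauge discussion from this proof; it belongs to the subsequent theorem, not this one.

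Your alternative route via the flatness equation \eqref{flat} as a differential equation for $a_{\gamma_t}$ is also more delicate than you suggest: the equation $\partial_t a=[a,b]+\partial_z b$ is an ODE in $t$, not $z$, and you would need to specify an initial condition in $t\in\C^\times$ (where there is no natural basepoint) rather than invoking the Birkhoff normalization, which is a condition in $z$. The paper sidesteps this entirely by appealing directly to the $\tilde{R}\leftrightarrow Te$ bijection, and you should too.
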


\begin{proof} 
Let $\omega$ be a connection on $P^* \rightarrow B^*$. Since $\omega$
is $\C^\times$ invariant, it is sufficient to consider pullbacks of
$\omega$ along sections $(\gamma_t, 1)$, which can be written \bes
\gamma_t^*\omega = a_{\gamma_t}(z, 1)dx + b_{\gamma_t}(z, 1)
\frac{dt}{t} \;.\ees If \bes a_{\gamma_t}(z, 1) = Te^{-\int_0^\infty
  \theta_{-s}b_{\gamma_t}(z, 1) ds}\star \partial_z
Te^{\int_0^\infty \theta_{-s}b_{\gamma_t}(z, 1) ds} \ees then \bes
\gamma_t^*\omega = D Te^{\int_0^\infty \theta_{-s}b_{\gamma_t}(z, 1) ds}
\; . \ees

Conversely, defining the connection by the pullback of its sections, we
see that $\gamma^*\omega$ is uniquely defined by $\gamma$. The section
$\gamma$ is uniquely defined by the map $\tilde{R}(\gamma ) \in
\g(\A)$. \end{proof}

Theorem \ref{mainthm} is a generalization of the main result of Connes
and Marcolli in \cite{CMbook}. Notice that if $\gamma(z) =
\gamma_-^{\star -1} \star \varepsilon(z)$, i.e. if $\gamma(z)$ is a
counterterm for some regularized Lagrangian, then \be \gamma_-^{\star
  -1}(z)^* \omega = D Te^{\int_0^\infty
  \theta_{-s}(\tilde{R}(\gamma_-^{\star -1})) ds} \;
. \label{counterterm}\ee If $\gamma(z)$ satisfies \eqref{local},
Ebrahimi-Fard and Manchon show that \be \lim_{z\rightarrow 0}
\tilde{R}(\gamma) = \beta(\gamma_{t})= \beta(\gamma_-^{ \star -1})
\label{bbeta}\; \ee and that this definition is equivalent to the 
Connes Kreimer definition for the $\gamma(z)$ corresponding to
dimensional regularization. Furthermore, if $\gamma(z)$ does not
satisfy \eqref{local}, then the $\lim_{z\rightarrow
  0}\tilde{R}(\gamma)$ is not well defined.  Therefore, if $\gamma$
satisfies \eqref{local} then \bes \gamma_-(z) = Te^{-\int_0^\infty
  \theta_{-s}(\frac{\beta(\gamma_-)}{z}) ds} \ees as in \cite{CMbook}.

The logarithmic differential operator defining the connection $\omega$
has a symmetry under right multiplication by sections of the form
$\gamma = \varepsilon \star \gamma_+(z)$. Let $\A_+= \C[[z]]$ be the
algebra of formal power series in $z$ with $\C$ coefficients. Notice
that the sections $\gamma_+(z)$ form the group $G(\A_+)$.

\begin{definition} 
Two pullbacks of the connection $\gamma_{t}^*\omega$ and
$\gamma_{t}^{\prime *}\omega$ are equivalent if and only if one
pullback can be written in terms of the action of $G(\A_+)$ on the other \bes
\gamma^{\prime *}_{t}\omega= D\psi_{t} + \psi_{t}^{\star -1} \star
\gamma_{t} ^*\omega \star \psi_{t}\ees for $\psi_{t} \in G(\A_+)_{t}$,
the group of sections that are regular in $z$ and $t$. I write this
equivalence as $\gamma_{t}^{\prime *}\omega \sim
\gamma_{t}^*\omega$.
\end{definition}

\begin{remark} 
This gauge equivalence is the same as the statement
$\gamma'_t = \gamma_t \star \psi_{t}$, and specifically, $\gamma'_{t-}
= \gamma_{t-}$. The gauge equivalence on the connection classifies
pullbacks by the counterterms of the corresponding sections.
\end{remark}

Connes and Marcolli's equisingular connection is the special case of
pulling back $\omega$ along a section, $\gamma$, for which
$\beta(\gamma)$ is well defined. Recall the definition

\begin{definition} The pullback  $\gamma^*_{t}\omega$ along
on $P^* \rightarrow B^*$ is equisingular when pulled back
to the bundle $P^* \rightarrow \Delta^*$ if and only if
\begin{itemize}\item $\omega$ is equisingular under the $\C^\times$
action on the section of $P^* \rightarrow B^*$.
\item For every pair of sections $\sigma, \; \sigma'$ of the
  $B\rightarrow \Delta^*$ bundle, $\sigma(0) = \sigma' (0)$, the
  corresponding pull backs of the connection $\omega$,
  $\sigma^*(\gamma_{t}^*\omega)$ and
  $\sigma^{\prime*}(\gamma_{t}^*\omega)$ are equivalent under the
  action of $G(\A_+)$. 
\end{itemize}
\end{definition}

The following summarizes the important properties of equisingular
connections.

\begin{prop}
The following statements are equivalent: \begin{enumerate} 
\item $\gamma_t^*\omega$ is an equisingular connection on $B^*$ 
\item Let $D\gamma(z, t) = \omega$. The counterterm is independent of
  the renormalization mass parameter \bes \frac{d}{dt}\gamma_-(z,t) = 0 \;.\ees This is the same as equation \eqref{local}. 
\item Write \bes \gamma_t^*\omega = a(z,1) dz + b(z, 1) \frac{dt}{t}
  \ees One can write \bes b(z,1)= \tilde{R}(\gamma_t) = \sum_{i=-1}^\infty \alpha_i(\gamma) z^i \ees
  where $\alpha_i(\gamma) \in \g(\C)$
\item The coefficient $\alpha_{-1} = \beta(\gamma)$ determines the
  pullback of the connection $\gamma^*\omega$ up to a $G(\A_+)$
  equivalence.
 \end{enumerate} \label{equivs}\end{prop}
\begin{proof} 
\begin{description}
\item{1 $\iff$ 4} That $\gamma^*\omega$ is determined by
  $\beta(\gamma)$ is proved in \cite{CMbook}. That $\beta(\gamma) =
  -a_{-1}(\gamma)$ is shown in \cite{EM} if 2 holds.
\item{1 $\iff$ 2} Rewriting $\sigma^*(\gamma_t) =
  \gamma_{\sigma(z)}$, one sees that $\frac{d}{dt}\sigma^*(\gamma_t) =
  \frac{d}{d\sigma}(\gamma_\sigma(z))$. Therefore,
  $\frac{d}{dt}(\gamma_t)_- = 0 \iff (\gamma_t)_-$ does not depend on
  $\sigma(z)$. This is exactly the second condition in the definition
  of equisingularity.
\item{2 $\Rightarrow$ 3} Proved in \cite{EM}, and by equation \eqref{bdef}
\item{3 $\Rightarrow$ 4} Follows from \eqref{bbeta} and \eqref{counterterm}.
\end{description}
\end{proof}

As stated in Remark \ref{sections}, choosing a regularization scheme
for a specified Lagrangian fixes a section of the renormalization
bundle. The global connection $\omega$ means that these sections
differ only by a gauge transform. From Proposition \ref{equivs}, if
two sections that the same counterterm which has a well defined
$\beta$-function, they are equivalent under gauge transformation.

\begin{example} 
A good example of this latter fact is $\zeta$-function
regularization. A propagator in $\zeta$-function regularization is
regularized \bes \int_{\R^6}\frac{1}{(p^2+m^2)^{1+s}} d^6p \ees while
under dimensional regularization it is \bes
\int_{\R^{6+z}}\frac{1}{(p^2+m^2)} d^{6+z}p \;. \ees The evaluation of
Feynman graphs of a Lagrangian under $\zeta$-function evaluation and
dimensional regularization both yield results in
$\C\{\{z\}\}$. Therefore they can both be expressed as sections of
this renormalization bundle. These sections both satisfy condition
\eqref{local}. After a suitable change of coordinates,
$\zeta$-function regularization is a Mellin transform of dimensional
regularization \bes \gamma_\zeta(s) = \int_0^\infty
z^s\gamma_{\rm{dim}}(z)\frac{dz}{z} \;. \ees Therefore, the
evaluations of a $\zeta$-function regularization and dimensional
regularization on Feynman diagrams differs only by multiplication by a
holomorphic function in $z$. Therefore, one expects \bes
(\gamma_{\textrm{dim}})_t^* \omega \sim (\gamma_\zeta)^*_t\omega \;,
\ees and that the two regularization schemes to give the same
$\beta$-function. This is a well known fact, established in many
papers, including \cite{Speer}.
\end{example}

The global connection, $\omega$, gives a way of translating between
different regularization schemes that are not gauge equivalent. That
is, it defines a relationship between two different $\beta$-functions,
or any value of $\tilde{R}$ by gauge transformations. Unrenormalizable
regularization schemes, i.e. those that do not satisfy condition
\eqref{local} are not well studied or understood because of their
unphysical nature.  The fact that they can be related in any way to
renormalizable regularization schemes is very surprising. The
implications that this may have about the structure of perturbatively
solving Quantum Field Theories is left for future work.

A second implication of the global connection is that two different
Lagrangians that share the same Hopf algebra are now sections that
differ only by a gauge transformation.

\begin{example}
The Lagrangians for a QFT in different gravitational settings have
different parameters, as some parts of the Lagrangian depends on the
background curvature. Explicitly, if $a^\dagger(\vec{p})$ and
$a(\vec{p})$ are the raising and lowering operators for a Fock space,
and $g$ is the metric of the space time, the field $\phi(\vec{x})$ can
locally be expressed in terms of the metric as \bes \int_{\R^6}
\frac{1}{(2\pi)^6\sqrt{2}(p^jg^i_jp_i +
  m^2)^{\frac{1}{4}}}a^\dagger(\vec{p})e^{i p^jg^i_jx_i}+
a(\vec{p})e^{-ip^jg^i_jx_i} \sqrt{|\det(g)|}d^6p \; .\ees Instead of
$|d\phi|^2$, the Laplacian operator is $\Delta(g) =
\frac{1}{\sqrt{|g|}} \partial_i (\sqrt{|g|}g^{ij}\partial_j)$. The
corresponding Lagrangian is now a function of the curvature, \bes
\mathcal{L} = -\frac{1}{2}\phi(g,x)(\Delta_g + m^2)\phi(g,x) +
\lambda\phi^3(g, x)\; .\ees The associated $\beta$-function will also
be a function of $g$. For a fixed regularization scheme, there is a
family of sections of the renormalization bundle $\gamma_g(z)$
parameterized by the curvature. If the regularization scheme is
renormalizable, then $\beta(\gamma_g)$ is well defined. The gauge
transformations on the bundle provide a way of relating the
$\beta$-functions in this family. This shows the existence of a global
$\beta$-function for a scalar field theory as a function of
curvature. Because of the Birkhoff decomposition of the sections
$\gamma_g(z)$ represent decomposition over coordinate patches, this
also shows that BPHZ renormalization is consistent over a curved
space-time manifold, and suggests that this renormalization bundle can
be constructed as a bundle over a curved background space as opposed
to the flat example constructed by Connes and Marcolli in
\cite{CMbook}. Working over a manifold in general, the global
$\beta$-function can be found via $\zeta$-function regularization, but
its construction is beyond the scope of this paper. For more on the
renormalization bundle over curved space time, see \cite{thesis2}.
\end{example}

\bibliographystyle{amsplain}
\bibliography{/home/mithu/bibliography/Bibliography}{}

\providecommand{\bysame}{\leavevmode\hbox to3em{\hrulefill}\thinspace}
\providecommand{\MR}{\relax\ifhmode\unskip\space\fi MR }
\providecommand{\MRhref}[2]{%
  \href{http://www.ams.org/mathscinet-getitem?mr=#1}{#2}
}
\providecommand{\href}[2]{#2}
\begin{thebibliography}{10}

\bibitem{thesis2}
Susama Agarwala, \emph{The $\beta$-function over curved space-time under
  $\zeta$-function regularization},  (2009), arXiv:0909.4122.

\bibitem{BK06}
Christoph Bergbauer and Dirk Kreimer, \emph{Hopf algebras in renormalization
  theory: Locality and {Dyson-Schwinger} equations from {H}ochschild
  cohomology}, IRMA Lectures in Mathematics and Theoretical Physics \textbf{10}
  (2006), 133 -- 164, arXiv:hep-th/0506190v2.

\bibitem{CK00}
Alain Connes and Dirk Kreimer, \emph{Renormalization in quantum field theory
  and the {R}iemann-{H}ilbert problem {I}: {T}he {H}opf algebra structure of
  graphs and the main theorem}, Communications in Mathematical Physics
  \textbf{210} (2001), 249--273, arXiv:hep-th/9912092v1.

\bibitem{CK01}
\bysame, \emph{Renormalization in {Q}uantum {F}ield {T}heory and the
  {R}iemann-{H}ilbert problem {II}: {T}he $\beta$ function, diffeomorphisms and
  renormalization group}, Communications in Mathematical Physics \textbf{216}
  (2001), 215--241, arXiv:hep-th/0003188v1.

\bibitem{CMbook}
Alain Connes and Matilde Marcolli, \emph{Noncommutative geometry, quantum
  fields, and motives}, American Mathematical Society, USA, 2008.

\bibitem{EM}
Kurusch Ebrahimi-Fard and Dominique Manchon, \emph{On matrix differential
  equations in the {H}opf algebra of renormalization}, Advances in Theoretical
  and Mathematical Physics \textbf{10} (2006), 879--913,
  arXiv:math-ph/0606039v2.

\bibitem{EK06}
Kurush Ebrahimi-Fard and Dirk Kreimer, \emph{Hopf algebra approach to {F}eynman
  diagram calculations}, Journal of physics A \textbf{38} (2006), R385 -- R406,
  arXiv:hep-th/0510202v2.

\bibitem{IASG}
David Gross, \emph{Renormalization groups}, In Quantum Fields and Strings: A
  Course for Mathematicians (Pierre Deligne, David Kazhkan, Pavel Etingof,
  John~W. Morgan, Daniel~S. Freed, David~R. Morrison, Lisa~C. Jeffrey, and
  Edward Witten, eds.), vol.~1, American Mathematical Society, Providence, RI,
  1999, pp.~551--596.

\bibitem{BEP}
Frederic~Patras Kurush Ebrahimi-Fard, Jose M. Garcia-Bondi\'a, \emph{A {L}ie
  theoretic approach to renormalizaton}, Communications in Mathematical Physics
  \textbf{276} (2001), 519--549, arXiv:hep-th/0609035.

\bibitem{Man}
Dominique Manchon, \emph{Hopf algebras and renormalization}, Handbook of
  Algebra (M.~Hazelwinkel, ed.), vol.~4, Elsevier, Oxford, UK, 2008,
  pp.~365--427.

\bibitem{Men}
Igor Mencattini, \emph{The structures of insertion elimination {L}ie algebra},
  Ph.D. thesis, Boston University, 2005.

\bibitem{Ry}
Lewis~H. Ryder, \emph{Quantum field theory}, Cambridge University Press, New
  York, 1985.

\bibitem{Spivak}
Michael Spivak, \emph{A comprehensive introduction to differential geometry},
  3rd ed., vol.~2, Publish or Perish, Houston, TX, 1999.

\bibitem{Ti}
Robin Ticciati, \emph{{Q}uantum field theory for mathematicians}, Cambridge
  University Press, New York, 1999.

\bibitem{IASW}
Edward Witten, \emph{Perturbative quantum field theory}, In Quantum Fields and
  Strings: A Course for Mathematicians (Pierre Deligne, David Kazhkan, Pavel
  Etingof, John~W. Morgan, Daniel~S. Freed, David~R. Morrison, Lisa~C. Jeffrey,
  and Edward Witten, eds.), vol.~1, American Mathematical Society, Providence,
  RI, 1999, pp.~419--473.

\end{thebibliography}

\end{document}